\begin{document}
\title{ON THE COMPLEXITY OF THE MULTIPLE STACK TSP, $k$STSP}
\author{Sophie Toulouse and Roberto Wolfler Calvo}
\institute{LIPN (UMR CNRS 7030) - Institut Galil\'ee, Universit\'e Paris 13\\99 av. Jean-Baptiste Cl\'ement, 93430 Villetaneuse, France.\\
			\email{sophie.toulouse@lipn.univ-paris13.fr, wolfler@lipn.univ-paris13.fr}
}

\maketitle
\begin{abstract}Given a universal constant $k$, the multiple Stack Travelling Salesman Problem ($k$STSP in short) consists in finding a pickup tour $T^1$ and a delivery tour $T^2$ of $n$ items on two distinct graphs. The pickup tour successively stores the items at the top of $k$ containers, whereas the delivery tour successively picks the items at the current top of the containers: thus, the couple of tours are subject to LIFO ({\em ``Last In First Out''}) constraints. This paper aims at finely characterizing the complexity of $k$STSP in regards to the complexity of TSP. First, we exhibit tractable sub-problems: on the one hand, given two tours $T^1$ and $T^2$, deciding whether $T^1$ and $T^2$ are compatible can be done within polynomial time; on the other hand, given an ordering of the $n$ items into the $k$ containers, the optimal tours can also be computed within polynomial time. Note that, to the best of our knowledge, the only family of combinatorial precedence constraints for which constrained TSP has been proven to be in $\mathbf{P}$ is the one of PQ-trees, \cite{BDW96}. Finally, in a more prospective way and having in mind the design of approximation algorithms, we study the relationship between optimal value of different TSP problems and the optimal value of $k$STSP.
\end{abstract}

\section{Introduction}\label{sec-intro}
\subsection{The problem specification}\label{sec-intro-pb}
Assume that a postal operator has to pick up $n$ items in some city $1$, and then to deliver the same items in some city $2$. Such a situation can be modelized by means of two TSP instances $I^1=(G^1,d^1)$ and $I^2=(G^2,d^2)$, where the two graphs $G^1=(V^1,E^1)$ and $G^2=(V^2,E^2)$ have the same order $n+1$ (vertex $0$ represents the depot, whereas vertices $[n]$ represent the location where the items have to be picked up or delivered), and the distance functions $d^1:E^1\rightarrow\mathbb{N},d^2:E^2\rightarrow\mathbb{N}$ associate integer values to the edges of $G^1,G^2$. The two TSP instances $I^1,I^2$ thus represent the search of an optimal pickup tour in city $1$, and the search of an optimal delivery tour in city $2$, respectively. If no constraint occurs between the two tours, then the problem is equivalent to the resolution of two independent TSP. In $k$STSP, one assumes that the tours are subject to LIFO contraints, namely: the pickup tour stacks the items into $k$ containers, so that the delivery tour must deliver {\em at first} the items that have been stored {\em at last} by the pickup tour. Hence, a solution of $k$STSP consists of a couple of tours $(T^1,T^2)$, together with a stacking order $\mathcal{P}$ on the $k$ containers that is compatible with both the pickup and the delivery tours. Here, a stacking order is defined as a set $\{P^1,\ldots,P^k\}$ of $k$ $q^\ell$-uples $P^\ell=(v^\ell_{1},\ldots,v^\ell_{q^\ell})$ that partitions $[n]$, where vertices $v^\ell_1$ and $v^\ell_{q^\ell}$ respectively represent the bottom and the top of the $\ell$th stack. A feasible solution $(T^1,T^2,\mathcal{P})$ is optimal if it is of optimal distance, where the distance is given by the sum $d^1(T^1)+d^2(T^2)$ of the distances of the two tours. For sake of simplicity, we will always consider that $G^1$ and $G^1$ are the complete directed graph ${\vec{K}}_{n+1}$ on $V=\{0\}\cup[n]$. In the case of symetric distance functions, one just has to consider $d^\alpha(u,v)=d^\alpha(v,u)$ for $u,v\in V$ and $\alpha\in\{1,2\}$; moreover, one could recognize unexisting arcs by associating to each couple of vertices $u,v\in V$ such that $(u,v)\notin E^\alpha$, e.g., the distance $d^{\alpha}(u,v)=d^{\alpha}_{max}+1$, where $d^{\alpha}_{max}=\max\{d^\alpha(e)\ |\ e\in E^\alpha\}$.

\subsection{Related work \& outline}\label{sec-intro-lit}
By contrast to other TSP problems, only a few literature exists on this problem (see, for example, \cite{FOT08,PM08} for some heuristic approaches), and none (to the best of our knowledge) about its complexity. Anyway, the problem we address naturally is $\mathbf{NP-hard}$, from TSP. 

Nevertheless, one could wonder about what combinatorial structure of $k$STSP impacts more on its complexity: the stacks (and the LIFO constraints they induce on the tours), or the permutations themselves? The answer is not clear and this paper shows why. First of all we prove that, given two tours $T^1$ and $T^2$, deciding whether $T^1$ and $T^2$ are compatible or not is tractable, since the decision reduces to $k$-coloring in comparability graphs. Moreover, given an ordering of the $n$ items into the $k$ stacks, the optimal tours can also be computed within polynomial time, by means of dynamic programming. Another interesting question concerns the relative complexity of $k$STSP in regards to TSP: how much its trickier combinatorial structure makes $k$STSP harder to solve (exactly as well as approximatively) than TSP? Although we do not provide formal answers to this latter question, we give some intuition that $k$STSP is globally harder than general TSP to optimize: it is obvious that efficient algorithms for $k$STSP can be derived in order to solve TSP; by contrast, we establish that tours of good quality for TSP may lead to arbitrary low quality solutions for STSP. 

The paper is organized as follows: we first expose in section~\ref{sec-pte} some notations and properties that will be useful for the next sections; section~\ref{sec-cpx} exhibits two tractable sub-problems (one of decision when the tours are given, \ref{sec-cpx-tour}, one of optimization when the stacks are given, \ref{sec-cpx-stack}); section~\ref{sec-apx} compares the behaviour of solution values in STSP instances and some related TSP instances, bringing to the fore that good resolution of TSP may not lead to good resolution of STSP; finally, section~\ref{sec-conc} concludes with some perspectives. 

\section{Preliminaries}\label{sec-pte}
Three strict orders $<^1$, $<^2$, $<^3$ on $[n]$ are associated, respectively, to the pickup tour $T^1=(0, u^1_1, \ldots u^1_n, 0)$, to the delivery tour $T^2=(0, u^2_1, \ldots u^2_n, 0)$ and to a stacking order $\mathcal{P}=\{P^1,\ldots,P^k\}$. The two orders $<^1$, $<^2$ are complete whereas $<^3$ is partial. It means that $\forall a\neq b\in[n]$ we can write:
$$\begin{array}{|lccl|}\hline
	<^1: 	&a<^1 b 														&\Leftrightarrow 	&T^1\mbox{ picks up }a\mbox{ before }b\\
	<^2: 	&a<^2 b 														&\Leftrightarrow 	&T^2\mbox{ delivers }b\mbox{ before }a\\\hline
	<^3: 	&a<^3 b															&\Leftrightarrow	&\exists \ell\in[1,k]\ /\ a,b\in P^\ell,\ a\mbox{ is stacked before }b\mbox{ in }P^\ell\\
				&\neg (a<^3 b) \wedge \neg (a<^3 b)	&\Leftrightarrow 	&a,b\mbox{ are stacked into two distinct stacks}\\\hline
\end{array}$$

\begin{lemma}\label{lem-order}
A solution $(T^1, T^2,\mathcal{P})$ is feasible iff the three orders $<^1$, $<^2$, $<^3$ it induces on $[n]$ satisfy the following conditions:
\begin{eqnarray}	
\forall a\neq b\in[n],\ a<^1 b	\Rightarrow	\neg (a>^3 b) \label{uno}\\
\forall a\neq b\in[n],\ a<^2 b	\Rightarrow \neg (a>^3 b) \label{due}
\end{eqnarray}
\end{lemma}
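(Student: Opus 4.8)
The plan is to read feasibility operationally and reduce it, container by container, to a pair of inclusions between the partial order $<^3$ and the two total orders $<^1,<^2$, then to check that these inclusions are literally the conditions \eqref{uno} and \eqref{due}. Operationally, $(T^1,T^2,\mathcal{P})$ is feasible precisely when the pickup tour, pushing each item onto the top of the container that $\mathcal{P}$ assigns to it, rebuilds every $P^\ell=(v^\ell_1,\ldots,v^\ell_{q^\ell})$ in the prescribed bottom-to-top order, while the delivery tour, always popping the current top, empties every $P^\ell$ in the reverse order. Since $\mathcal{P}$ partitions $[n]$, each item lives in exactly one container and every push or pop touches that container alone; the $k$ stacks therefore evolve independently, and feasibility is the conjunction over $\ell\in[1,k]$ of the feasibility of $P^\ell$.

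Within one stack this gives two intrinsic requirements. In the pickup phase $v^\ell_i$ ends below $v^\ell_j$ for $i<j$, so it must have been stacked earlier, i.e. $a<^3 b\Rightarrow a<^1 b$; in the delivery phase the item covering another must leave first, which under the reversed convention defining $<^2$ reads $a<^3 b\Rightarrow a<^2 b$. So I would argue that feasibility is equivalent to ${<^3}\subseteq{<^1}$ together with ${<^3}\subseteq{<^2}$. To match the statement, I would contrapose: since $<^1$ is total, $\neg(a<^1 b)\Leftrightarrow b<^1 a$ for $a\neq b$, so ${<^3}\subseteq{<^1}$ becomes $b<^1 a\Rightarrow\neg(a<^3 b)$, and after renaming $a\leftrightarrow b$ this is $a<^1 b\Rightarrow\neg(a>^3 b)$, namely \eqref{uno}; as every step is reversible the two are equivalent, and the identical argument with $<^2$ turns ${<^3}\subseteq{<^2}$ into \eqref{due}.

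The two directions then split as follows. Necessity is immediate from the operational reading, since a buried item can neither have been pushed after the item above it nor be delivered after it. Sufficiency is the step I expect to require the most care: granting \eqref{uno}--\eqref{due}, hence both inclusions, I must certify that the process never stalls, i.e. that the delivery tour always finds its next item exposed. Fixing the $i$-th delivered item $c$ in stack $P^\ell$, every $a$ above $c$ satisfies $c<^3 a$, hence $c<^2 a$, i.e. $a$ is delivered before $c$; a short induction on the delivery step then shows all such $a$ have already been popped, so $c$ is on top --- and the companion check that the pickup phase builds each stack in the correct bottom-to-top order is just ${<^3}\subseteq{<^1}$ again. The genuine obstacle is thus not computational but conceptual: keeping the reversed delivery convention of $<^2$ straight, and confirming that two purely order-theoretic inclusions really do guarantee a stall-free, physically realizable stacking.
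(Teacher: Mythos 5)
Your proof is correct and takes essentially the same route as the paper's: necessity is noted as immediate, and sufficiency is established operationally by showing that under conditions~(\ref{uno}) and~(\ref{due}) the pickup and delivery processes never stall (the next item to handle in each stack is always the one exposed). Your explicit reformulation of the conditions as the order inclusions ${<^3}\subseteq{<^1}$ and ${<^3}\subseteq{<^2}$, and the per-stack decomposition, merely spell out in detail what the paper's terser argument leaves implicit.
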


\begin{proof}
The necessary condition is obvious. For the sufficient condition, let consider a pickup tour $T^1$ and a stacking order $\mathcal{P}$ (the argument is rather similar for the delivery tour). For any $i\in[n]$, $T^1$ has to pick up the item $u^1_i$, that is of index $j$ in some stack $P^\ell$; this is possible iff the previous item that $T^1$ has picked up in $P^\ell$ is the item $u^\ell_{j-1}$, or there is no such index and $j=1$, what is always true if $T^1$ and $\mathcal{P}$ verify condition (\ref{uno}).
\qed\end{proof}

\section{Complexity classes and properties}\label{sec-cpx}

\subsection{Global complexity}\label{sec-cpx-glob}
The problem obviously is $\mathbf{NP-hard}$, for arbitrary instances $(G^1,d^1;G^2,d^2)$ of $k$STSP (where by {\em ``arbitrary''}, we mean that we do not make any asumption, neither on the graph completeness, nor on the symetry of the distance functions). When the distance functions $d^1$ and $d^2$ are the same, up to the arc direction (that is, $d^1(a,b)=d^2(b,a)$ for all $a,b\in\{0\}\cup[n]$), it is equivalent to the regular TSP (consider on the one hand that $T^1=(0, u_1,\ldots, u_n, 0)$ is an optimal pickup tour iff $T^2=(0, u_n, \ldots, u_1, 0)$ is an optimal delivery tour, on the other hand that every stacking order that is feasible for $T^1$ also is feasible for $T^2$). Second, for a given triple $(T^1,T^2,\mathcal{P})$ where $T^1$, $T^2$ are two tours on $\{0\}\cup[n]$ and $\mathcal{P}$ is a stacking order of $[n]$ using $k$ stacks, checking whether $(T^1,T^2,\mathcal{P})$ is feasible or not can be done within linear time (quite immediate from Lemma~\ref{lem-order}). 

\subsection{Deciding feasibility for a couple of tours}\label{sec-cpx-tour}
Let us denote by $G^{\neq}=(V^{\neq},E^{\neq})$ the graph induced by the set of pairs $\{a,b\}$ such that the two orders $<^1$ and $<^2$ are discordant:
	$$E^{\neq}= \{\ \{a,b\}\    |\ a\neq b\in [n],\ a<^1 b\ \wedge\ a>^2 b\ \},\ V^{\neq}=\bigcup_{\{a,b\}\in E^{\neq}}\{a,b\}$$

\begin{lemma}\label{lem-coloration}
	Given two tours $T^1$, $T^2$, a compatible stacking order $\mathcal{P}$ exists iff $\chi(G^{\neq})\leq k$, where $\chi(G^{\neq})$ denotes the chromatic number on $G^{\neq}$.
\end{lemma}

\begin{proof}
For the necessary condition, consider a feasible solution $(T^1, T^2,\mathcal{P})$ and two items $a\neq b$ such that $\{a,b\}\in E^{\neq}$, iff $a<^1 \wedge a>^2 b$, or $a>^1 b \wedge a<^2 b$. In both cases, we know from Lemma~\ref{lem-order} that $\neg(a>^3b) \wedge \neg(a>^3b)$; thus, the $k$ stacks in $\mathcal{P}$ correspond to $k$ independent sets in $G^{\neq}$. For the sufficient condition, we build from a $k$-coloring on $V^{\neq}$ a stacking order $\mathcal{P}$ that fulfills, together with $T^1$ and $T^2$, conditions (\ref{uno}) and (\ref{due}) of Lemma~\ref{lem-order}. We first fill each stack $P^\ell$ with the items of the $\ell$th color set $\{v^\ell_1,\ldots,v^\ell_{q^\ell}\}$, by considering on $P^\ell$ the order induced by the relation $<^{1,2}$ defined as: $<^{1,2}=<^1\wedge<^2$ (the two orders do coincide on each color $\ell$). It remains to insert into the stacks the items from $[n]\backslash V^{\neq}$. The orders $<^1$ and $<^2$ also coincide on $([n]\backslash V^{\neq})\times [n]$; we can therefore write $[n]\backslash V^{\neq}=(v_1,\ldots,v_r)$ with $v_1<^{1,2}\ldots<^{1,2} v_r$. For index $i$ from $1$ to $r$, we insert $v_i$ in position $j(v_i)+1$ in $P^1$ iff $j(v_i)$ is the current maximum index $j$ such that $v^1_j<^{1,2}v_i$, if such an index exists; otherwise, $j(v_i)=0$ (in any case, indices in $P^1$ are updated after each insertion). We finally obtain a partition of $[n]$ within a set of $k$ stacks such that in every stack, the elements are ordered with respect to $<^{1,2}$, what fulfills conditions (\ref{uno}) and (\ref{due}).
\qed\end{proof}

Graph coloring problems (in general, but also $k$-coloring for a universal constant $k\geq 3$) are known to be $\mathbf{NP-c}$ (see, {\em e.g.}, \cite{GJ79}); nevertheless, it turns out that $G^{\neq}$ belongs to the class of perfect graphs, for which determining $\chi(G)$ is in $\mathbf{P}$, \cite{GLS81}. Hence, the considered decision problem is tractable, for any $k$ (and this even if $k$ is not any longer considered as a universal constant, but as being part of the input).

\begin{theorem}\label{thm-coloration}
The STSP sub-problem that consists, given a couple $(T^1,T^2)$, in deciding whether there exists or not a compatible stacking order, is in $\mathbf{P}$.
\end{theorem}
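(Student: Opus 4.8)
The plan is to reduce the decision problem to a chromatic-number computation via Lemma~\ref{lem-coloration}, and then to exploit the structure of $G^{\neq}$ to carry out this computation in polynomial time. By Lemma~\ref{lem-coloration}, a couple $(T^1,T^2)$ admits a compatible stacking order if and only if $\chi(G^{\neq})\leq k$. Since $G^{\neq}$ can be built in time $O(n^2)$ (for each of the $\binom{n}{2}$ pairs $\{a,b\}$ one simply tests whether $<^1$ and $<^2$ disagree), it suffices to decide $\chi(G^{\neq})\leq k$ in polynomial time.

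First I would establish the crucial structural fact, namely that $G^{\neq}$ is a comparability graph. The natural candidate orientation is to direct each edge $\{a,b\}\in E^{\neq}$ from $a$ to $b$ whenever $a<^1 b$ and $a>^2 b$; let me write $a\prec b$ for this relation. I would then check transitivity directly: if $a\prec b$ and $b\prec c$, then $a<^1 b<^1 c$ and $c<^2 b<^2 a$ because $<^1$ and $<^2$ are total orders, whence $a<^1 c$ and $a>^2 c$, i.e. $a\prec c$; in particular $\{a,c\}\in E^{\neq}$. Thus $\prec$ is a strict partial order whose comparability graph is exactly $G^{\neq}$, which proves that $G^{\neq}$ is a comparability graph.

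I would then conclude using the fact that comparability graphs are perfect. For perfect graphs the chromatic number can be computed in polynomial time by the ellipsoid-based algorithm of Gr\"otschel, Lov\'asz and Schrijver, \cite{GLS81}; comparing the value $\chi(G^{\neq})$ so obtained against $k$ settles the decision, and this remains polynomial even when $k$ is part of the input. (Alternatively one may bypass the general machinery: by Mirsky's theorem the minimum number of antichains covering the poset $\prec$---equivalently $\chi(G^{\neq})$---equals the length of a longest $\prec$-chain, which is computable in polynomial time by dynamic programming along any linear extension; this yields a direct and elementary algorithm.) Combining the reduction with either coloring procedure gives the claimed membership in $\mathbf{P}$.

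The main obstacle is the structural step: recognizing $G^{\neq}$ as a comparability graph and verifying that the orientation induced by the two total orders is transitive. Once perfectness is secured, the polynomial decidability of the coloring is essentially off-the-shelf.
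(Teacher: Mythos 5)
Your proof is correct, and your main line is exactly the paper's: reduce via Lemma~\ref{lem-coloration} to deciding $\chi(G^{\neq})\leq k$, endow $E^{\neq}$ with the orientation $(a,b)$ iff $a<^1 b\wedge a>^2 b$, verify its transitivity to conclude that $G^{\neq}$ is a comparability graph (hence perfect), and invoke the polynomial-time coloring of perfect graphs from \cite{GLS81}, noting as the paper does that this works even when $k$ is part of the input.

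Where you genuinely go beyond the paper is your parenthetical alternative. Since independent sets of $G^{\neq}$ are precisely the antichains of the partial order $\prec$, Mirsky's theorem gives $\chi(G^{\neq})=$ the length of a longest $\prec$-chain, i.e.\ the longest sequence of items increasing under $<^1$ and decreasing under $<^2$; this is computable by an elementary $O(n^2)$ dynamic program (process the items in $<^1$ order and take $L(b)=1+\max\{L(a)\ |\ a<^1 b \wedge a>^2 b\}$). This bypasses the ellipsoid machinery of \cite{GLS81} entirely, which is both simpler and more practical---the paper does not make this observation. The only element of the paper's proof absent from yours is constructive: its Algorithm~\ref{algo-stack-from-tour} additionally outputs a compatible stacking order from the computed coloring, but the statement is purely a decision problem, so your argument fully settles it; if you wanted the stacks as well, the greedy antichain decomposition dual to Mirsky's theorem (assign each item the level $L(\cdot)$ as its stack index) would produce them within the same time bound.
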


\begin{proof}
$E^{\neq}$ represents the pairs $\{a,b\}$ of $[n]$ such that $(a<^1 b\wedge a>^2 b)$ or $(a>^1 b\wedge a<^2 b)$, where we recall that $<^1$ and $<^2$ both totally order $[n]$. Therefore, $G^{\neq}$ is a comparability graph. Indeed, consider the set of arcs $F^{\neq}=\{(a,b)\in [n]\times[n]\ |\ a<^1 b\wedge a>^2 b\}$: $(i)$~for all $a\neq b \in V^{\neq}$, $(a,b)\in F^{\neq}\vee(b,a)\in F^{\neq}$ iff $\{a,b\}\in E^{\neq}$; $(ii)$~for all $a\neq b \in V^{\neq}$, $(a,b)\in F^{\neq}\Rightarrow (b,a)\notin F^{\neq}$; $(iii)$~for all distinct $a,b,c \in V^{\neq}$, $(a,b)\in F^{\neq}$ and $(b,c)\in F^{\neq}$ iff $a<^1 b<^1 c \wedge a>^2 b>^2 c$ and thus, $(a,c)\in F^{\neq}$. Then, $F^{\neq}$ defines a transitive orientation of the edge set $E^{\neq}$, and $G^{\neq}$ is a comparability graph. By the way, note that a comparability graph $G=(V,E)$ may represent the conflict graph of some couple or orders $(<^1,<^2)$ iff its complementary graph $\overline{G}=(V,\overline{E})$ also is a comparability graph (representing $<^1\wedge<^2$).
\begin{algorithm}\label{algo-stack-from-tour}
\caption{STACKING FROM $T^1$ AND $T^2$}
\KwIn{$T^1$ a pickup tour, $T^2$ a delevery tour, $k$ the number of stacks.}
\KwOut{A compatible stacking order $\mathcal{P}$ iff $(T^1,T^2)$ is feasible.}
\BlankLine

{\bf for} $\ell=1$ {\bf to} $k$ {\bf do } $P^\ell\longleftarrow\emptyset$\;
\BlankLine
build $G^{\neq}=(V^{\neq},E^{\neq})$ from $T^1$, $T^2$\; 
\BlankLine

\tcp{Coloring stage}
{\bf for each} $v\in [n]\backslash V^{\neq}$ {\bf do} $\mathcal{C}(v) \longleftarrow 1$\;
\BlankLine
compute a minimum coloring $\left(C:V^{\neq}\rightarrow [\chi(G^{\neq})],\ v\mapsto C(v)\right)$ on $G^{\neq}$\;
{\bf if} $\chi(G^{\neq}) > k$ {\bf then return} NO\;
\BlankLine

\tcp{Stacking stage (done according to $<^1$)}
{\bf for} $i=1$ {\bf to} $n$ {\bf do} $\{$ $\ell\longleftarrow C(u^1_i)$; stack $u^1_i$ into $P^{\ell}$ $\}$\;
\BlankLine

\Return $\mathcal{P}=\{P^1,\ldots,P^k\}$\;
\end{algorithm}
Algorithm~\ref{algo-stack-from-tour} is a polynomial time procedure that, given a couple of tours $(T^1,T^2)$, responses NO if this couple is unfeasible, returns a compatible stacking order $\mathcal{P}$ otherwise.
\qed\end{proof}

\subsection{Optimizing the tours when the stacks are given}\label{sec-cpx-stack}
In this section, we prove that it is a tractable problem to compute the optimal tours, when the stacks are fixed. 
\begin{theorem}\label{thm-TSP}
For a given stacking order $\mathcal{P}$, one can find an optimal pickup tour and an optimal delivery tour within polynomial time (but exponential in $k$).
\end{theorem}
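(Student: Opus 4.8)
The first step is to notice that, once the stacking order $\mathcal{P}$ is frozen, the two tours become completely independent. By Lemma~\ref{lem-order}, feasibility of $(T^1,T^2,\mathcal{P})$ splits into condition (\ref{uno}), which constrains only the pair $(T^1,\mathcal{P})$, and condition (\ref{due}), which constrains only $(T^2,\mathcal{P})$; since the objective $d^1(T^1)+d^2(T^2)$ is itself separable, an optimal couple is obtained by minimising the two lengths independently. I would therefore solve a single generic problem --- compute a shortest pickup tour compatible with $\mathcal{P}$ --- and invoke it twice: directly for $T^1$ with distance $d^1$, and for $T^2$ with distance $d^2$ after reversing each stack, since (\ref{due}) forces the delivery to empty every stack from top to bottom, i.e. in the order reverse to the one (\ref{uno}) imposes on the pickup.

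The constraint on the pickup tour can then be restated combinatorially. Condition (\ref{uno}) is equivalent to $<^3\,\subseteq\,<^1$: inside each stack $P^\ell$ the tour must visit $v^\ell_1,\ldots,v^\ell_{q^\ell}$ in this very bottom-to-top order, as a subsequence. Because the stacks partition $[n]$, the partial order $<^3$ is a disjoint union of $k$ chains, and the feasible pickup tours are precisely the Hamiltonian tours through the depot whose induced visiting order is a linear extension of these $k$ chains. The task is thus to find a cheapest linear extension of $k$ disjoint chains for the length $d^1$.

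The heart of the argument is a dynamic programme indexed by the $k$-dimensional frontier. A state is a vector $(i_1,\ldots,i_k)$ with $0\le i_\ell\le q^\ell$, meaning that the $i_\ell$ lowest items $v^\ell_1,\ldots,v^\ell_{i_\ell}$ of stack $\ell$ have already been visited. Since each chain is consumed as a prefix, this vector pins down the set of visited items exactly, and the last visited vertex is necessarily one of the current tops $v^\ell_{i_\ell}$ (an interior item cannot be last, as the items above it in its stack must come later in any linear extension). I would therefore define $D[(i_1,\ldots,i_k),\ell]$ as the minimum $d^1$-length of a path leaving the depot $0$, visiting exactly the items encoded by the frontier, and ending at $v^\ell_{i_\ell}$. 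A transition appends the next item $v^m_{i_m+1}$ of any stack $m$ with $i_m<q^m$, at cost $d^1(v^\ell_{i_\ell},v^m_{i_m+1})$; the base values are $d^1(0,v^\ell_1)$ for the frontiers with a single visited item, and the optimal pickup length is $\min_\ell\{D[(q^1,\ldots,q^k),\ell]+d^1(v^\ell_{q^\ell},0)\}$.

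Correctness follows by induction on the number of visited items: arc lengths being additive, the set of feasible completions of a partial tour depends only on the items still to be visited (determined by the frontier) and on the current last vertex, so Bellman's optimality principle applies and no useful history is discarded. For the complexity, the number of frontier vectors is $\prod_{\ell=1}^{k}(q^\ell+1)$, which under $\sum_\ell q^\ell=n$ is $O((n/k+1)^k)=O(n^k)$; multiplying by the $k$ choices of last stack and the $O(k)$ successors of each state gives a bound of order $k^2 n^k$ --- polynomial in $n$ for a fixed $k$, yet exponential in $k$, exactly as stated. The one point I would justify with care is that the frontier is a \emph{sufficient} statistic, i.e. that the visited set is recoverable from $(i_1,\ldots,i_k)$ alone; this holds only because $<^3$ is a union of chains, so that each stack is necessarily visited as a prefix --- it is this special structure, rather than the mere sparsity of the precedence relation, that keeps the state space polynomial.
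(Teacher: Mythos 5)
Your proposal is correct and follows essentially the same route as the paper: the same separation of the two tours (with the delivery tour handled by reversing each stack), the same dynamic programme over frontier vectors $(i_1,\ldots,i_k)$ with labels indexed by the stack containing the last visited item, the same transitions, and the same $O(n^k)$-type complexity bound. The only differences are presentational — you write the recurrence forward rather than backward and make explicit the justification that the frontier plus last-stack index is a sufficient statistic, a point the paper leaves implicit.
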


\begin{proof} We only present the argument for the pickup tour (the proof being rather similar for the delivery tour). Let $\mathcal{P}=P^1,\ldots,P^k$ where $P^\ell=(v^\ell_1,\dots,v^\ell_{q^\ell})$ for any $\ell$ be a stacking order. A pickup tour that is compatible with $\mathcal{P}$ starts by picking up the items which must be placed at the bottom of each stack, until all the stacks have been completely read. Hence, we will consider the space of states $\mathcal{S}=\times_{\ell=1}^k[0,q^\ell]$, where a state $e=(e^1,\ldots,e^k)\in \mathcal{S}$ represents the set of items on each stack that have already been picked up. Therefore, $e^\ell = h$ means that items $v^\ell_1,\ldots,v^\ell_{h}$ have been collected, and that the current bottom (that is, its $(h+1)$th element $v^\ell_{h+1}$) is the next item in $P^\ell$ that has to be picked up. Let denote with $W(e)=\cup_{\ell=1}^k\{v^\ell_1,\ldots,v^\ell_{e^\ell}\}$ the set of collected items once the state $e$ has been reached. Althought there are (in general) an exponential number of paths to reach a given state $e$, there are only (at most) $k$ possible preceding states, depending on which stack has been considered at last. Hence, we associate to each state $e$ its list of possible predecessors $p(e,1),\ldots,p(e,k)$, where $p(e,\ell)=(e^1,\ldots,e^{\ell-1},e^\ell-1,e^{\ell+1},\ldots,e^k)$, for $e$ and $\ell$ such that $e^\ell \geq 1$. 

In order to build an optimal tour, we associate to each state $e$ a collection of $k$ labels that correspond to its cost. For any $e\neq(0,\ldots,0)\in\mathcal{S}$, and for any $\ell\in[k]$, the label $\mathcal{E}(e,\ell)$ gives the minimum cost for picking up all the items of $W(e)$ starting from 0, compatible with $\mathcal{P}$, and that end with $P^\ell$. According to this definition,  $\mathcal{E}(e,\ell)$ may only depend on $\mathcal{E}(p(e,\ell),\ell')$, for $\ell'\in[k]$: the current sub-tour $T^1$ may reach $e$ after having picked up $v^\ell_{e^\ell}$ iff $v^\ell_{e^\ell}$ has not been picked up yet. Then, for  any $e\neq(0,\ldots,0),(q^1,\ldots,q^k)\in \times_{\ell=1}^k[0,q^\ell]$ and for any $\ell\in[k]$, we have the following reccurrence relation:
$$\mathcal{E}(e,\ell)= \left\{\begin{array}{ll}
	+\infty										&\mbox{if }e^\ell = 0\\
	\min_{\ell'=1}^k \{\mathcal{E}(p(e,\ell),\ell')+d^1(v^{\ell'}_{p(e,\ell)^{\ell'}},v^{\ell}_{e^{\ell}}) \ |\ p(e,\ell)^{\ell'} \geq 1\}
														&\mbox{if }e^\ell \geq 1
\end{array}\right.$$

Note that item $v^{\ell'}_{p(e,\ell)^{\ell'}}$ differs from $v^{\ell'}_{e^{\ell'}}$ (that is, $p(e,\ell)^{\ell'}$ differs from $e^{\ell'}$) iff $\ell'=\ell$. The initial conditions are given by the $k$ states $f(\ell)=(0,\ldots,0,1,0,\ldots,0)$ that correspond to the $\ell$th canonical vectors: 
$$\mathcal{E}(f(\ell),\ell')= \left\{\begin{array}{ll}
	+\infty\mbox{ if }\ell'\neq \ell, &d^1(0, v^{\ell}_{1})\mbox{ if }\ell'=\ell
\end{array}\right\}$$

Finally, the expression of the labels on the final state $F=(q^1,\ldots,q^k)$ is the following (for $\ell$ such that $F^\ell\geq 1$):
$$\mathcal{E}(F,\ell)= 
	\min_{\ell'=1}^k \{\mathcal{E}(p(F,\ell),\ell')+d^1(v^{\ell'}_{p(F,\ell)^{\ell'}},v^{\ell}_{q^{\ell}})+d^1(v^{\ell}_{q^{\ell}},0) \ |\ p(F,\ell)^{\ell'} \geq 1\}$$

\begin{algorithm}\label{algo-empiler}
\caption{optimal PICKUP TOUR($\mathcal{P}$)}
\KwIn{$I=(d^1,d^2)$ an instance of $k$STSP, $\mathcal{P}$ a stacking order on $I$.}
\KwOut{An optimal pickup tour $T^1$ that is compatible with $\mathcal{P}$.}
\BlankLine

\tcp{Initialization stage}
{\bf for} $e\in\mathcal{S}$, $\ell\in[k]$ s.t. $e^\ell = 0$ {\bf do} $\mathcal{E}(e,\ell) \longleftarrow +\infty$\;
\BlankLine

{\bf for} $\ell\in[k]$ {\bf do} $\mathcal{E}(f(\ell), \ell) \longleftarrow d^1(0,v^\ell_1)$\;
\BlankLine

\tcp{Dynamic procedure}
\For{$p=2$ to $n-1$}
{
	\For{$e\in\mathcal{S}$ s.t. $|e|=p$}
	{
		\For{$\ell= 1$ to $k$ s.t. $e^\ell \geq 1$}
		{
			$\mathcal{E}(e,\ell) \longleftarrow 
					\min_{\ell'=1}^k\{\ \mathcal{E}(p(e,\ell),\ell') + d^1(v^{\ell'}_{p(e,\ell)^{\ell'}},v^\ell_{e^{\ell}})\ |\ p(e,\ell)^{\ell'}\geq 1\ \}$\;
		}
	}
}
\BlankLine

\tcp{Termination}
\For{$\ell= 1$ to $k$ s.t. $F^\ell \geq 1$}
{
	$\mathcal{E}(F,\ell) \longleftarrow 
			\min\limits_{\ell'=1}^k\{\ \mathcal{E}(p(F,\ell),\ell')+d^1(v^{\ell'}_{p(F,\ell)^{\ell'}},v^\ell_{F^{\ell}})+d^1(v^\ell_{F^{\ell}},0)\ |\ p(F,\ell)^{\ell'}\geq 1\ \}$\;
}
\BlankLine
		
\Return $T^1$ the tour associated with the label $\arg\min\{\mathcal{E}(F,q^\ell)\ |\ \ell=1,\ldots,k\}$\;
\end{algorithm}

The optimal value is the minimal cost among $\mathcal{E}(F,1),\ldots,\mathcal{E}(F,k)$, since any feasible pickup tour must end with the top of some stack. Furthermore, the recurrence relation indicates that the labels of a state $e$ (including $F$) such that $|e|=p$ (where $|\cdot|$ denotes the Hamming norm) only depend on a subset of the states $e'$ such that $|e|'=p-1$. Based on these observations, Algorithm~\ref{algo-empiler} computes an optimal pickup tour within polynomial time. The number of states to consider is upper bounded by $(n+1)^k-1$ (the worst configuration occurs when the items are fairly distributed in the $k$ stacks). The computation of the $k$ labels of a given state requires at worst $k^2$ comparisons and the global complexity is $\mathcal{O}((n+1)^{k})$. Note that the computation of an optimal delivery tour is perfectly symetric, considering the reverse order on each stack.
\qed\end{proof}

\section{Evaluating optimal $k$STSP {\em vs.} optimal TSP}\label{sec-apx}
In this section, we discuss relationships between solution values of the two TSP tours and the optimal value of the $k$STSP. For a given instance $I=(d^1,d^2)$ of the $k$STSP, we define the two instances $I^1=(K_{n+1},d^1)$ and $I^2=(K_{n+1},d^2)$ of the TSP. The optimal values ({\em resp.}, the worst solution values) on $I^1$, $I^2$ (for the TSP) and $I$ (for the $k$STSP) are respectively denoted by $\mathrm{opt}_{TSP}(I^1)$, $\mathrm{opt}_{TSP}(I^2)$ and $\mathrm{opt}_{kSTSP}(I)$ ({\em resp.}, by $\mathrm{wor}_{TSP}(I^1)$, $\mathrm{wor}_{TSP}(I^2)$ and $\mathrm{wor}_{kSTSP}(I)$). For any $I$, these extremal values obviously verify relations~(\ref{eq-optTSP-exact-bad-apx-1}) and~(\ref{eq-optTSP-exact-bad-apx-2}). Any feasible couple $(T^1,T^2)$ for the $k$STSP is feasible for the couple of TSP instances $(I^1,I^2)$. For any tour $T^1$ on $I^1$ ({\em resp.}, $T^2$ on $I^2$), there exists a compatible tour $T^2$ ({\em resp.}, $T^1$). Note that for this latter fact (and thus, for relation~\ref{eq-optTSP-exact-bad-apx-3}), we must assume that the underlying graphs are complete.
	\begin{eqnarray}
	\label{eq-optTSP-exact-bad-apx-1}
	\mathrm{opt}_{TSP}(I^1)+\mathrm{opt}_{TSP}(I^2)\leq\mathrm{opt}_{kSTSP}(I)\\
	\label{eq-optTSP-exact-bad-apx-2}
	\mathrm{wor}_{kSTSP}(I)\leq \mathrm{wor}_{TSP}(I^1)+\mathrm{wor}_{TSP}(I^2)\\
	\label{eq-optTSP-exact-bad-apx-3}
	\mathrm{opt}_{kSTSP}(I)\leq \left\{\begin{array}{c}
																		\mathrm{opt}_{TSP}(I^1)+\mathrm{wor}_{TSP}(I^2)\\
																		\mathrm{wor}_{TSP}(I^1)+\mathrm{opt}_{TSP}(I^2)
																		\end{array}\right\} \leq\mathrm{wor}_{kTSP}(I)
	\end{eqnarray}
\noindent In particular we discuss the results obtained by two simple heuristics denoted here after TWS and TWD and based on the idea of solving to optimality a single TSP. TWS builds a solution of the kSTSP, by solving to optimality the delivery tour, while the pickup tour is fixed (or the reverse). TWD determines a solution for the kSTSP, by solving to optimality a single stack TSP on a graph whose distance function is obtained by summing up the original distance functions. We prove that both TWS and TWD give an unbouded error, when particular instances families are considered. Let's introduce some more notations. Given the TSP solutions $T'^1,T'^2$ for $I^1,I^2$, we denote by $\mathrm{opt}_{2STSP|T'^{1}}$ ({\em resp.} $\mathrm{opt}_{2STSP|T'^{2}}$) the best solution value for the $k$STSP on $I$, among the solutions $(T^1,T^2,\mathcal{P})$ where $T^1=T'^1$ ({\em resp.}, $T^2=T'^2$). The optimal TSP tours on $I^1,I^2$ are denoted by $T^{1,*},T^{2,*}$, respectively. Moreover, for a given $\alpha\in]0,1[$, we denote by $I_\alpha$ the TSP instance on $K_{n+1}$ with distance function $d_\alpha=2\left(\alpha d^1 + (1-\alpha)(d^2)^{-1}\right)$, where $(d^2)^{-1}$ is defined as $(d^2)^{-1}(a,b)=d^2(a,b)$ for any couple $(a,b)$ of items. The optimal tour on $I_\alpha$ will be denoted by $T^*_\alpha$.

\begin{lemma}\label{lem-optTSP-exact-bad-apx}
We consider arbitrary distance functions $d^1,d^2$ (symetric or not).
\begin{enumerate}
\item\label{lem-optTSP-exact-bad-apx-1}For $a\in\{1,2\}$, the optimal value $\mathrm{opt}_{2STSP|T^{*,a}}(I)$ verifies: 
$$\begin{array}{cccc}
	\inf_{I\in I_{2STSP}}	&\mathrm{opt}_{2STSP|T^{a,*}}(I)/\mathrm{opt}_{2STSP}(I)																					&=	&+\infty
\end{array}$$
\item\label{lem-optTSP-exact-bad-apx-2}For $\alpha=1/2$, the quantities $d^1(T^*_\alpha)+d^2(T^*_\alpha)$ and $\mathrm{opt}_{2STSP}(I)$ verify: 
	$$\inf_{I\in I_{2STSP}}\left(d^1(T^*_\alpha)+d^2(T^*_\alpha)\right)/\mathrm{opt}_{2STSP}(I)\ =\ +\infty$$	
\end{enumerate}
\end{lemma}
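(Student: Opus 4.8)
The statement asks me to prove that two natural heuristics for 2STSP — fixing an optimal TSP tour on one side and optimizing the other (giving $\mathrm{opt}_{2STSP|T^{a,*}}$), or optimizing a single "combined" TSP on $I_\alpha$ with $\alpha=1/2$ — both have unbounded approximation ratio against the true optimum $\mathrm{opt}_{2STSP}(I)$. Since both claims assert an infimum of a ratio equals $+\infty$, the clean way to proceed is to exhibit, for each claim, a \emph{family} of instances $I_N$ (indexed by the number of items $n$, or a scale parameter $N$) on which the ratio tends to $+\infty$. Let me sketch this.

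\begin{proof}[Proof sketch]
The plan is to construct, for each of the two statements, a parametrized family of 2STSP instances on which the relevant heuristic value diverges relative to the optimum, while keeping $\mathrm{opt}_{2STSP}(I)$ bounded (say, linear or even constant after normalization).

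\textbf{For statement \ref{lem-optTSP-exact-bad-apx-1}.} Fix $a=1$ (the case $a=2$ is symmetric). The idea is to design $d^1$ so that its unique optimal pickup tour $T^{1,*}$ imposes a \emph{rigid} ordering $<^1$ on the items, and then to choose $d^2$ so that the \emph{only} delivery orderings compatible with $T^{1,*}$ (via Lemma~\ref{lem-order}, i.e.\ those whose discordance graph $G^{\neq}$ is $2$-colorable, by Lemma~\ref{lem-coloration}) are forced to use extremely expensive arcs of $d^2$. Meanwhile, a genuinely optimal 2STSP solution is allowed to pick a \emph{suboptimal} but cheap-to-pair pickup tour $T^1$, so that the pair $(T^1,T^2)$ avoids all the expensive $d^2$ arcs. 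Concretely, I would make $d^2$ take one huge value $M$ on a small set of arcs that $T^{1,*}$ cannot avoid in the delivery phase, and small values elsewhere; letting $M\to\infty$ (or $M=M(n)\to\infty$) makes $\mathrm{opt}_{2STSP|T^{1,*}}(I)\geq M$ while $\mathrm{opt}_{2STSP}(I)$ stays bounded, forcing the ratio to $+\infty$. The compatibility bookkeeping — verifying which delivery tours survive the LIFO/$2$-coloring constraint when $T^{1}$ is frozen — is exactly what Lemmas~\ref{lem-order} and~\ref{lem-coloration} let me automate.

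\textbf{For statement \ref{lem-optTSP-exact-bad-apx-2}.} Here $\alpha=1/2$, so $d_{1/2}=d^1+(d^2)^{-1}$ and $T^*_{1/2}$ minimizes $d^1(T)+d^2(T^{rev})$ over a \emph{single} tour $T$, i.e.\ it optimizes the sum under the implicit assumption that pickup and delivery traverse the same cyclic order. The obstruction to exploit is that $k$STSP does \emph{not} require $T^1$ and $T^2$ to be reverses of one another: with two stacks one can ``split'' the items so that $T^1$ and $T^2$ follow two quite different orders. So I would build $d^1,d^2$ so that every single tour $T$ is simultaneously penalized — cheap on $d^1$ forces expensive on $d^2$ and vice versa, so $d^1(T^*_{1/2})+d^2(T^*_{1/2})\geq M$ — whereas a true 2STSP optimum uses a $2$-colorable split (realizable by Lemma~\ref{lem-coloration}) to route the pickup and delivery independently and cheaply, keeping $\mathrm{opt}_{2STSP}(I)$ bounded. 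Again scaling the penalty $M$ with $n$ drives the ratio to infinity.

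\textbf{Main obstacle.} The routine part is the scaling argument; the delicate part is the explicit gadget construction — simultaneously controlling (i) the optimal single-tour behaviour on $d^1$, $d^2$, and $d_{1/2}$, and (ii) the feasibility of the cheap 2STSP solution under the two-stack LIFO constraint. In particular for part~\ref{lem-optTSP-exact-bad-apx-2} I must ensure the cheap pickup/delivery pair is genuinely \emph{realizable} with only two stacks, which amounts to checking $\chi(G^{\neq})\leq 2$ for the discordant-pair graph of the two chosen orders; this is the step where I expect to spend the most care, since it requires the cheap orders $<^1,<^2$ to differ in a way whose conflict graph is bipartite (triangle-free in the comparability sense). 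Once a gadget meeting these constraints is fixed, letting the penalty grow yields both infima.
\qed
\end{proof}
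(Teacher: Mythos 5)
Your proposal correctly identifies the overall shape of the argument --- like the paper, you want explicit instance families on which the heuristic value diverges while $\mathrm{opt}_{2STSP}$ stays small, with Lemmas~\ref{lem-order} and~\ref{lem-coloration} handling the feasibility bookkeeping --- but what you have written is a plan, not a proof. The lemma is an existential statement about instances, and you never exhibit one: everything you set aside as ``the delicate part'' is precisely the mathematical content. For part~\ref{lem-optTSP-exact-bad-apx-1} the paper takes $d^1_n(u,v)=1$ if $v=u+1$ and $1+\varepsilon$ otherwise, and $d^2_n(u,v)=1$ if $v=u+1$ and $n$ otherwise; it then shows by a case analysis that every delivery tour compatible with the frozen $T^{1,*}=(0,1,\ldots,n,0)$ costs at least $n(n+3)/2$, the crucial step being that such a tour cannot use two consecutive cheap arcs $(u-1,u),(u,u+1)$, since then $u-1,u,u+1$ would be pairwise incomparable under $<^3$ and would need three distinct stacks, impossible for $k=2$; and it exhibits an explicit odd/even stacking ($P^1_n=(n,n-2,\ldots)$, $P^2_n=(n-1,n-3,\ldots)$) certifying that the cheap pair with delivery tour $(0,1,\ldots,n,0)$ is LIFO-feasible, so that $\mathrm{opt}_{2STSP}(I_n)\leq (n+1)(2+\varepsilon)$. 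None of this case analysis or feasibility verification appears in your text; invoking Lemmas~\ref{lem-order} and~\ref{lem-coloration} as ``bookkeeping'' only names the obligations, it does not discharge them.

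Moreover, the one concrete design choice you do commit to is doubtful. You propose to put the huge weight $M$ on ``a small set of arcs that $T^{1,*}$ cannot avoid in the delivery phase.'' Once the pickup order is frozen, the compatible delivery orders with $k=2$ are exactly the interleavings of the reversals of two increasing subsequences partitioning $[n]$ --- a rich family --- and you give no argument that a small set of arcs can intersect all of them. The paper achieves unavoidability the opposite way: \emph{all} arcs are expensive except the $n+1$ arcs $(u,u+1)$, and the three-items-in-two-stacks obstruction above forces roughly half the arcs of any compatible delivery tour to be expensive. (Your $M$-scaling idea itself is sound: one could fix $n$, replace the penalty $n$ by $M$ in the paper's instance, and let $M\to\infty$; the gap is not the scaling but the missing gadget.) The same criticism applies to part~\ref{lem-optTSP-exact-bad-apx-2}: your ``cheap on $d^1$ forces expensive on $d^2$'' idea is exactly what the paper's family $H_n$ realizes --- every arc there has aggregate cost at least $n+1$, hence $d^1(T^*_{1/2})+d^2(T^*_{1/2})\geq (n+1)^2$, while an explicitly constructed compatible pair of tours has total cost at most $8n+2$ --- but in your proposal the existence of a $2$-stack-feasible cheap pair is asserted, not constructed. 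As it stands, the burden of proof is essentially untouched.
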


\begin{proof}
\ref{lem-optTSP-exact-bad-apx}-\ref{lem-optTSP-exact-bad-apx-1}, the asymetric case. Consider the instance family $\left(I_n\right)_{n\geq 3}$, $I_n=(d^1_n,d^2_n)$, defined as (indexes are taken modulo $n+1$):
$$\begin{array}{ll}
	d^1_n(u,v)=\	\left\{\begin{array}{lr}
						1				&\mbox{if }v=u+1\\
						1+\varepsilon	&\mbox{otherwise}
				\end{array}\right.	&d^2_n(u,v)=\	\left\{\begin{array}{ll}
																									1						&\mbox{if }v=u+1\\
																									n						&\mbox{otherwise}
																								\end{array}\right.
\end{array}$$

The optimal values for TSP on $I^1_n$ and $I^2_n$ both are $n+1$, reached by the tours $T_n^{1,*}=T_n^{2,*}=(0,1,2,\ldots,n,0)$. If we fix $T_n'^{1}=T_n^{1,*}$, then any stacking order $\mathcal{P}_n=\{P_n^1,P_n^2\}$ that is compatible with $T_n'^{1}$ will order the items in such a way that contradicts the order induced by the optimal delivery tour. In order to evaluate the best possible compatible $T_n^2$, we consider three cases:
\begin{itemize}
\item Case $(0,1)\in T_n^2$: $1>^2 u\ \forall u\neq 1\in[n]$, $1<^1 u\ \forall u\neq 1$ and thus, item $1$ is non comparable to any other item under $<^3$; hence, $P_n^a=(1), P_n^{3-a}=(2,\ldots,n)$ (for $a\in\{1,2\})$, $T_n^2=(0,1,n,n-1,\ldots,3,2,0)$ and $d_n^2(T_n^2)= 1+n^2$. 
\item Case $(n,0)\in T_n^2$: by means of a similar argument, we obtain $P_n^a=(n), P_n^{3-a}=(1,\ldots,n-1)$, $T_n^2=(0,n-1,n-2,\ldots,2,1,n,0)$ and $d_n^2(T_n^2)= 1+n^2$. 
\item Case $(0,1),(n,0)\notin T_n^2$: consider any item $u\in [n]\backslash \{1,n\}$, and assume $(u-1,u),(u,u+1)\in T_n^2$; we then deduce from $u-1>^2 u >^2 u+1$ and $u-1<^1 u <^1 u+1$ that items $u-1$, $u$ and $u+1$ are pairwise non comparable under $<^3$, what is not possible for $k=2$. Hence, $T_n^2$ uses at least one arc ({\em resp.}, exactly 2 arcs) of distance~$n$ per item $v\in[n]$ ({\em resp.}, for the depot $0$) and thus, $d_n^2(T_n^2)\geq 1/2(n(n+1)+2n)=n(n+3)/2$. 
\end{itemize}

Moreover, the following solution is optimal for $I_n$, of value $2(n+1)+(\lfloor(n+1)/2\rfloor+1)\varepsilon$; it consists in stacking items of odd and even values by decreasing order into separated containers, which enables to use the delivery tour $T_n^2=T^{2,*}_n$, while putting into $T_n^1$ a maximum number of arcs of kind $(u,u+1)$ (figure~\ref{fig-optTSP-exact-bad-apx-1} illustrates instance $I_n$ for even value of $n$):
$$\begin{array}{lcl}
	P_n^1			&=	&\left(n, n-2, \ldots, n-2i, \ldots, n-2\lfloor (n-1)/2\rfloor\right)\\
	P_n^2			&=	&\left(n-1, n-3, \ldots, n-(2i+1), \ldots, n-(2\lfloor (n-2)/2\rfloor+1)\right)\\
	\mathcal{T}_n 	&= 	&\{(0,n-1)\}\cup\{(n-(2i+1), n-2i,n-(2i+3))\ |\ i=0,\ldots,\lfloor (n-4)/2\rfloor\}\\
	T_n^1 			&= 	&\mathcal{T}\cup\{(2,0)\}\mbox{ if }n\mbox{ even},\ 
						\mathcal{T}\cup\{(3,1),(1,0)\}\mbox{ if }n\mbox{ odd}\\
	T_n^2			&=	&(0,1,\ldots,n,0)
\end{array}$$

We thus get the expected result (note by the way that, since $\mathrm{opt}_{TSP}(I_n^1)+\mathrm{wor}_{TSP}(I_n^2)=(n+1)^2$, the ratio $\mathrm{opt}_{2STSP|T_n'^{1}}(I_n)/(\mathrm{opt}_{TSP}(I_n^1)+\mathrm{wor}_{TSP}(I_n^2))$ is asymptotically $1/2$):
$$\left.\begin{array}{lcl}
	\mathrm{opt}_{2STSP|T_n'^{1}}(I_n)	&\geq &(n+1)+n(n+3)/2\\
	\mathrm{opt}_{2STSP}(I_n)					&\leq &(n+1)(2+\varepsilon)
	\end{array}\right\}\ \Rightarrow\ 
	\frac{\mathrm{opt}_{2STSP|T_n'^{1}}(I_n)}{\mathrm{opt}_{2STSP}(I_n)}\mathop{\longrightarrow}_{n\rightarrow +\infty} +\infty$$

\ref{lem-optTSP-exact-bad-apx}-\ref{lem-optTSP-exact-bad-apx-1}, the symetric case. 
Consider $\left(J_{n}\right)_{{n}\geq 6}$, $J_{n}=(d^1_{n},d^2_{n})$, defined as (see figure~\ref{fig-optTSP-exact-bad-apx-2} for an illustration):
$$\begin{array}{ll}
	d^1_{n}(u,v)=\	\left\{\begin{array}{lr}
						1							&\mbox{if }v=u\pm1\\
						1+\varepsilon	&\mbox{otherwise}
				\end{array}\right.	&d^2_{n}(u,v)=\	\left\{\begin{array}{ll}
																								1	&\mbox{if }v+u\in\{n,n+1\}\\
																								n	&\mbox{otherwise}
																						\end{array}\right.
\end{array}$$
The tours $T_n^{1,*}=(0,1,2,\ldots,n,0)$ and $T_n^{2,*}=(0,n,1,n-1,2,n-2,3\ldots,\lceil n/2\rceil,0)$ are optimal on $J_n^1,J_n^2$, of value $n+1$ and $2n$, respectively. Similary to the asymetric case, we show that, for items $u=2,\ldots,n-1$ such that $2u\notin\{n-1,n,n+1,n+2\}$, any tour $T_n^2$ thas is compatible with $T_n'^1=T_n^{1,*}$ cannot use the whole (undirected) sequence\footnote{due to lack of space, the proof is not provided here; please contact the authors for further information}:
	$$\{u+1,n-u,u,n+1-u,u-1\}$$

Hence, $\mathrm{opt}_{2STSP|T_n'^1}(J_n)\geq (n+1)+((n-4)(3+n)+5*4)/4\geq n^2/4$, whereas the following solution is of value $3n+2\varepsilon-1$ (case $n$ even):
$$\begin{array}{l}
	P_n^1=(1,2,3,\ldots,n/2),\ \ \ P_n^2=(n, n-1,n-2,\ldots,n/2+1)\\
	T_n^1=(0,1,2,\ldots,n/2;n,n-1,\ldots,n/2+2,n/2+1;0)\\
	T_n^2=(0,n,1,n-1,2,\ldots,n/2-1,n/2+1,n/2;0)
\end{array}$$

\ref{lem-optTSP-exact-bad-apx}-\ref{lem-optTSP-exact-bad-apx-2}. Consider the following symetric instance family $\left(H_{n}\right)_{{n}\geq 3}$:
$$\begin{array}{|ll|ll|l|}\hline
	\mbox{condition}				&										&d^1_{n}(u,v)		&d^2_{n}(u,v)	&d^1_{n}(u,v)+d^2_{n}(u,v)\\\hline
	\mbox{{\bf if }}				&v=u\pm 1						&1							&n						&n+1\\
	\mbox{{\bf else if }}		&v=u\pm 2						&1							&n+1					&n+2\\
	\mbox{{\bf else if }}		&u+v\in\{n+1,n+3\}	&n+1						&1						&n+2\\
	\mbox{{\bf else}}				&										&n+1						&n+1					&2n+2\\\hline
\end{array}$$

The tour $T^*_{n,1/2}=(0,1,2,\ldots,n,0)$ that is optimal for the aggregate distance function is of value $(n+1)^2$, whereas there exists a couple $(T_n^{1,*},T_n^{2,*})$ of compatible optimal tours with $d_n^1(T_n^{1,*})=n+1$ and $d_n^2(T_n^{2,*})=(n-3)+5(n+1)$ (for $n$ even) or $d_n^2(T_n^{2,*})=(n-4)+6(n+1)$ (for $n$ odd); hence, the following solution of 2STSP is optimal, of value in $\{7n+2,8n+2\}$ (case $n$ even):
$$\begin{array}{l}
	P_n^1=(1,3,5,\ldots,n-3,n-1),\ \ \ P_n^2=(n, n-2,n-4,\ldots,4,2)\\
	T_n^1=(0;1,3,\ldots,n-3,n-1;n,n-2,\ldots,4,2,0)\\
	T_n^2=(0;1,n,3,n-2,5,\ldots,6,n-3,4,n-1,2;0)
\end{array}$$

Note that there exist simplier instance families verifying that $d_{n,1/2}(T^{*}_{n,1/2})$ is arbitrarly large {\em vs.} $opt_{2STSP}$; however, for more relevancy, we built $(H_n)_{(n)}$ in such a way that $T^{*}_{n,1/2}$ and the couple $(T_n^{1,*},T_n^{2,*})$ are not compatible.
\qed\end{proof}

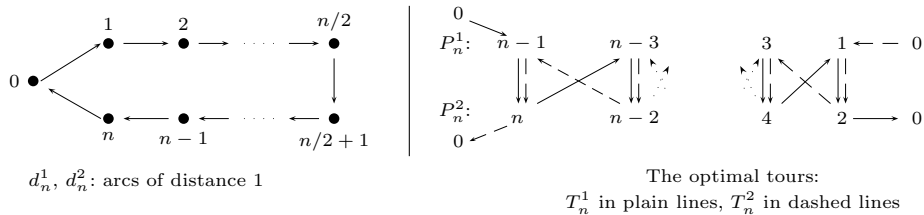
\begin{figure}[t]
\begin{center}
\begin{pspicture}(-0.3,-0.8)(14.7,1.3)
\qdisk(0,0.5){2pt}\uput[180](0,0.5){\scriptsize{0}}
\pcline[linewidth=0.2pt]{->}(0.1,0.5)(0.9,1)
\pcline[linewidth=0.2pt]{->}(0.9,0.0)(0.2,0.4)
\pcline[linewidth=0.2pt]{->}(4,0.8)(4,0.2)
\qdisk(1,1){2pt}\uput[90](1,1){\scriptsize{1}}
\pcline[linewidth=0.2pt]{->}(1.2,1)(1.8,1)
\qdisk(2,1){2pt}\uput[90](2,1){\scriptsize{2}}
\pcline[linewidth=0.2pt]{->}(2.2,1)(2.6,1)
\pcline[linestyle=dotted, linewidth=0.4pt]{-}(2.8,1)(3.2,1)
\pcline[linewidth=0.2pt]{->}(3.4,1)(3.8,1)
\qdisk(4,1){2pt}\uput[90](4,1){\scriptsize{$n/2$}}
\qdisk(1,0){2pt}\uput[-90](1,0){\scriptsize{$n$}}
\pcline[linewidth=0.2pt]{<-}(1.2,0)(1.8,0)
\qdisk(2,0){2pt}\uput[-90](2,0){\scriptsize{$n-1$}}
\pcline[linewidth=0.2pt]{<-}(2.2,0)(2.6,0)
\pcline[linestyle=dotted, linewidth=0.4pt]{-}(2.8,0)(3.2,0)
\pcline[linewidth=0.2pt]{<-}(3.4,0)(3.8,0)
\qdisk(4,0){2pt}\uput[-90](4,0){\scriptsize{$n/2+1$}}

\uput[-90](1.5,-0.5){\scriptsize{$d^1_n$, $d^2_n$: arcs of distance $1$}}
\uput[180](6,1){\scriptsize{$P_n^1$:}}
\uput[180](7,1){\scriptsize{$n-1$}}
\uput[180](8.5,1){\scriptsize{$n-3$}}
\uput[180](10,1){\scriptsize{$3$}}
\uput[180](11,1){\scriptsize{$1$}}
\uput[180](6,0.1){\scriptsize{$P_n^2$:}}
\uput[180](6.7,0){\scriptsize{$n$}}
\uput[180](8.5,0){\scriptsize{$n-2$}}
\uput[180](10,0){\scriptsize{$4$}}
\uput[180](11,0){\scriptsize{$2$}}
\uput[180](5.9,1.4){\scriptsize{0}}
\uput[180](5.9,-0.3){\scriptsize{0}}
\uput[180](12,1){\scriptsize{0}}
\uput[180](12,0){\scriptsize{0}}
\pcline[linewidth=0.2pt]{->}(5.8,1.3)(6.3,1.1)			
\pcline[linewidth=0.2pt]{->}(10.9,0)(11.5,0)				
\pcline[linewidth=0.2pt]{->}(6.45,0.8)(6.45,0.2)		
\pcline[linewidth=0.2pt]{->}(6.7,0.2)(7.8,0.8)			
\pcline[linewidth=0.2pt]{->}(7.95,0.8)(7.95,0.2)		
\pcline[linestyle=dotted, linewidth=0.4pt]{->}(8.2,0.2)(8.45,0.7)		
\pcline[linestyle=dotted, linewidth=0.4pt]{->}(9.4,0.3)(9.65,0.8)		
\pcline[linewidth=0.2pt]{->}(9.7,0.8)(9.7,0.2)			
\pcline[linewidth=0.2pt]{->}(9.95,0.2)(10.6,0.8)		
\pcline[linewidth=0.2pt]{->}(10.7,0.8)(10.7,0.2)		
\pcline[linestyle=dashed, linewidth=0.2pt]{<-}(10.9,1)(11.5,1)				
\pcline[linestyle=dashed, linewidth=0.1pt]{<-}(5.8,-0.3)(6.3,-0.1)		
\pcline[linestyle=dashed,linewidth=0.2pt]{->}(6.55,0.8)(6.55,0.2)			
\pcline[linestyle=dashed,linewidth=0.2pt]{->}(8.05,0.8)(8.05,0.2)			
\pcline[linestyle=dashed,linewidth=0.2pt]{->}(7.8,0.2)(6.7,0.8)				

\pcline[linestyle=dashed,linestyle=dotted, linewidth=0.4pt]{->}(9.65,0.2)(9.4,0.7)		
\pcline[linestyle=dashed,linestyle=dotted, linewidth=0.4pt]{->}(8.45,0.3)(8.2,0.8)		

\pcline[linestyle=dashed,linewidth=0.2pt]{->}(9.8,0.8)(9.8,0.2)			
\pcline[linestyle=dashed,linewidth=0.2pt]{->}(10.6,0.2)(9.9,0.8)		
\pcline[linestyle=dashed,linewidth=0.2pt]{->}(10.8,0.8)(10.8,0.2)		

\uput[-90](9.3,-0.5){\scriptsize{The optimal tours:}}
\uput[-90](9.3,-0.8){\scriptsize{$T_n^1$ in plain lines, $T_n^2$ in dashed lines}}
\pcline[linewidth=0.4pt]{-}(5,-0.5)(5,1.5)
\end{pspicture}
\caption{\label{fig-optTSP-exact-bad-apx-1}\small{Instance $I_n$ for $n$ even.}}
\end{center}
\end{figure}

\begin{figure}[t]
\begin{center}
\begin{pspicture}(-0.3,-0.8)(14.7,1.3)
\qdisk(0,0.5){2pt}\uput[180](0,0.5){\scriptsize{0}}
\pcline[linewidth=0.2pt]{-}(0.1,0.5)(0.9,1)													
\pcline[linewidth=0.2pt]{-}(0.9,0.05)(0.15,0.40)										
\pcline[linestyle=dashed,linewidth=0.2pt]{-}(0.9,-0.05)(0.15,0.30)	
\pcline[linewidth=0.2pt]{-}(3.95,0.8)(3.95,0.2)											
\pcline[linestyle=dashed,linewidth=0.2pt]{-}(4.05,0.8)(4.05,0.2)		
\pcline[linestyle=dashed,linewidth=0.2pt]{-}(1,0.8)(1,0.2)					
\pcline[linestyle=dashed,linewidth=0.2pt]{-}(2,0.8)(2,0.2)					
\pcline[linestyle=dotted,linewidth=0.4pt]{-}(2.2,0.9)(2.4,0.5)			
\pcline[linestyle=dotted,linewidth=0.4pt]{-}(3.7,0.5)(3.8,0.1)			
\pcline[linestyle=dashed,linewidth=0.2pt]{-}(1.1,0.9)(1.9,0.1)			
\qdisk(1,1){2pt}\uput[90](1,1){\scriptsize{1}}
\pcline[linewidth=0.2pt]{-}(1.2,1)(1.8,1)														
\qdisk(2,1){2pt}\uput[90](2,1){\scriptsize{2}}
\pcline[linewidth=0.2pt]{-}(2.2,1)(2.6,1)														
\pcline[linestyle=dotted, linewidth=0.4pt]{-}(2.8,1)(3.2,1)					
\pcline[linewidth=0.2pt]{-}(3.4,1)(3.8,1)														
\qdisk(4,1){2pt}\uput[90](4,1){\scriptsize{$n/2$}}
\qdisk(1,0){2pt}\uput[-90](1,0){\scriptsize{$n$}}
\pcline[linewidth=0.2pt]{-}(1.2,0)(1.8,0)														
\qdisk(2,0){2pt}\uput[-90](2,0){\scriptsize{$n-1$}}
\pcline[linewidth=0.2pt]{-}(2.2,0)(2.6,0)														
\pcline[linestyle=dotted, linewidth=0.4pt]{-}(2.8,0)(3.2,0)					
\pcline[linewidth=0.2pt]{-}(3.4,0)(3.8,0)														
\qdisk(4,0){2pt}\uput[-90](4,0){\scriptsize{$n/2+1$}}
\uput[-90](1.9,-0.5){\scriptsize{$d^1_n$ (plain lines), $d^2_n$ (dashed lines):}}
\uput[-90](1.9,-0.9){\scriptsize{edges of distance $1$}}
\uput[180](6,1){\scriptsize{$P^1$:}}
\uput[180](6.7,1){\scriptsize{$1$}}
\uput[180](8.2,1){\scriptsize{$2$}}
\uput[180](10.3,1.2){\scriptsize{$n/2-1$}}
\uput[180](11.2,1.2){\scriptsize{$n/2$}}
\uput[180](6,0.1){\scriptsize{$P^2$:}}
\uput[180](6.7,0){\scriptsize{$n$}}
\uput[180](8.5,-0.2){\scriptsize{$n-1$}}
\uput[180](10.3,-0.2){\scriptsize{$n/2+2$}}
\uput[180](11.5,-0.2){\scriptsize{$n/2+1$}}
\uput[180](5.9,1.4){\scriptsize{0}}
\uput[180](5.9,-0.3){\scriptsize{0}}
\uput[180](12,1){\scriptsize{0}}
\uput[180](12,0){\scriptsize{0}}
\pcline[linewidth=0.2pt]{->}(5.8,1.3)(6.3,1.1)			
\pcline[linewidth=0.2pt]{->}(7,1)(7.8,1)						
\pcline[linestyle=dotted, linewidth=0.4pt]{->}(8.2,1.0)(8.8,1.0)		
\pcline[linestyle=dotted, linewidth=0.4pt]{->}(9.1,1.0)(9.7,1.0)		
\pcline[linewidth=0.2pt]{->}(10.1,1)(10.7,1)				
\pcline[linewidth=0.2pt]{->}(10.9,0)(11.5,0)				
\pcline[linewidth=0.2pt]{->}(7,0)(7.8,0)						
\pcline[linestyle=dotted, linewidth=0.4pt]{->}(8.2,0.0)(8.8,0.0)		
\pcline[linestyle=dotted, linewidth=0.4pt]{->}(9.1,0.0)(9.7,0.0)		
\pcline[linewidth=0.2pt]{->}(10.1,0)(10.7,0)				
\pcline[linewidth=0.2pt]{->}(10.7,0.9)(6.8,0.1)		
\pcline[linestyle=dashed, linewidth=0.2pt]{<-}(10.9,1)(11.5,1)				
\pcline[linestyle=dashed, linewidth=0.1pt]{<-}(5.8,-0.3)(6.3,-0.1)		
\pcline[linestyle=dashed,linewidth=0.2pt]{->}(6.5,0.8)(6.5,0.2)				
\pcline[linestyle=dashed,linewidth=0.2pt]{->}(8,0.8)(8,0.2)						
\pcline[linestyle=dashed,linewidth=0.2pt]{->}(7.8,0.2)(6.7,0.8)				
\pcline[linestyle=dashed,linestyle=dotted, linewidth=0.4pt]{->}(9.65,0.1)(9.4,0.6	)		
\pcline[linestyle=dashed,linestyle=dotted, linewidth=0.4pt]{->}(8.45,0.4)(8.2,0.9)		
\pcline[linestyle=dashed,linewidth=0.2pt]{->}(9.8,0.8)(9.8,0.2)			
\pcline[linestyle=dashed,linewidth=0.2pt]{->}(10.6,0.2)(9.9,0.8)		
\pcline[linestyle=dashed,linewidth=0.2pt]{->}(10.8,0.8)(10.8,0.2)		
\uput[-90](9.3,-0.5){\scriptsize{The optimal tours:}}
\uput[-90](9.3,-0.8){\scriptsize{$T_n^1$ in plain lines, $T_n^2$ in dashed lines}}
\pcline[linewidth=0.4pt]{-}(5,-0.5)(5,1.5)
\end{pspicture}
\caption{\label{fig-optTSP-exact-bad-apx-2}\small{Instance $J_n$ for $n$ even.}}
\end{center}
\end{figure}
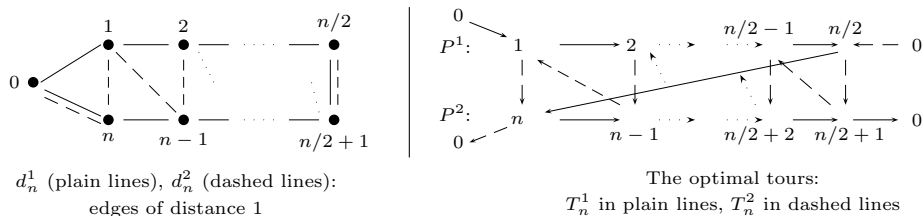

\section{Conclusion}\label{sec-conc}
This paper address the time complexity of $k$STSP, whose highly combinatorial structure suggests the search of approximation algorithms (may be the most likely for the differential ratio, \cite{M02}). The good complexity of its sub-problems makes relevant the design of exact methods based on constraints decomposition of $k$STSP. Finally, it would be interesting to better characterize the shape of precedence constraints for which TSP/sequencing under precedence contraints are tractable. Indeed, we deduce from Theorem~\ref{thm-TSP} that TSP under {\em ``stack precedence constraints''} is in $\mathbb{P}$. Equivalently, the single machine scheduling with sequence-dependent time or cost setup under the same shape of constraints, that we denote by (1/$k$-stack,$p$,$ST_{sd}$/$C_{max}$) $\equiv$ (1/$k$-stack,$p$,$ST_{sd}$/TST) and (1/$k$-stack,$p$,$SC_{sd}$/$TSC$), are tactable (for the notations used in order to represent the $\alpha$/$\beta$/$\gamma$-classification, \cite{GLLRK79} of scheduling problems, we refer to~\cite{AGA99}). Here, by {\em ``stack precedence constraints''}, we mean that the constraints define a partial order on $[n]$ within at most $k$ ordered subsets, where $k$ is a universal constant.


\end{document}